\numberwithin{equation}{section}
\newtheorem{Theorem}{Theorem}[section]
\newtheorem{Proposition}[Theorem]{Proposition}
\newtheorem{Corollary}[Theorem]{Corollary}
\begin{document}
%\allowdisplaybreaks

\newcommand{\arXivNumber}{2105.12593}

\renewcommand{\thefootnote}{}

\renewcommand{\PaperNumber}{084}

\FirstPageHeading

\ShortArticleName{Exponential Formulas, Normal Ordering and the Weyl--Heisenberg Algebra}

\ArticleName{Exponential Formulas, Normal Ordering\\ and the Weyl--Heisenberg Algebra}

\Author{Stjepan MELJANAC~$^{\rm a}$ and Rina \v{S}TRAJN~$^{\rm b}$}

\AuthorNameForHeading{S.~Meljanac and R.~\v{S}trajn}

\Address{$^{\rm a)}$~Division of Theoretical Physics, Ruder Bo\v{s}kovi\'c Institute,\\
\hphantom{$^{\rm a)}$}~Bijeni\v{c}ka cesta 54, 10002 Zagreb, Croatia}
\EmailD{\href{mailto:meljanac@irb.hr}{meljanac@irb.hr}}

\Address{$^{\rm b)}$~Department of Electrical Engineering and Computing, University of Dubrovnik,\\
\hphantom{$^{\rm b)}$}~\'{C}ira Cari\'{c}a 4, 20000 Dubrovnik, Croatia}
\EmailD{\href{mailto:rina.strajn@unidu.hr}{rina.strajn@unidu.hr}}

\ArticleDates{Received May 27, 2021, in final form September 09, 2021; Published online September 15, 2021}

\Abstract{We consider a class of exponentials in the Weyl--Heisenberg algebra with exponents of type at most linear in coordinates and arbitrary functions of momenta. They are expressed in terms of normal ordering where coordinates stand to the left from momenta. Exponents appearing in normal ordered form satisfy differential equations with boundary conditions that could be solved perturbatively order by order. Two propositions are presented for the Weyl--Heisenberg algebra in 2 dimensions and their generalizations in higher dimensions. These results can be applied to arbitrary noncommutative spaces for construction of star products, coproducts of momenta and twist operators. They can also be related to the BCH formula.}

\Keywords{exponential operators; normal ordering; Weyl--Heisenberg algebra; noncommutative geometry}

\Classification{16S32; 81R60}

\section{Introduction}

The first model of NC geometry was that of the Snyder spacetime~\cite{Snyder}, and NC spaces have since been used in theoretical physics in the
efforts to understand and model Planck scale phenomena~\cite{dfr1,dfr2,Luk2,Luk1, Snyder}.
Besides the Snyder spacetime, another widely studied model is the $\kappa$-Minkowski spacetime \cite{gacluknow,dasklukwor, Luk2,Luk1,majidruegg}, where the parameter $\kappa$ is usually interpreted as the Planck mass or the quantum gravity scale and the coordinates themselves close a Lie algebra. The $\kappa$-Poincar\'e quantum group \cite{ChPr, Luk2,Luk1,majid}, as a possible quantum symmetry of the $\kappa$-Minkowski spacetime, allows for the study of deformed relativistic spacetime symmetries and the corresponding dispersion relations \cite{asc-disp, MMMs2}. It is an example of a Hopf algebra, where the algebra sector is the same as that of the Poincar\'e algebra, but the coalgebra sector is deformed.

A powerful tool in the study of NC spaces is that of realizations in terms of the Weyl--Heisenberg algebra \cite{battisti,GGHMM, MMMs2,pikuticgupta}. Namely, the NC coordinates are expressed in terms of the commutative ones, which allows one to simplify the methods of calculation on the deformed spacetime. Exponential formulas \cite{battisti2,MMSSt1,MMSSt2,MMSSt3,MSS} are related to the deformed coproduct of momenta in NC spaces and also appear in the computations of star products, which are both needed for the definition of a field theory and the notion of differential calculus on a NC spacetime. In~\cite{MSS}, combinatorial recursions were found and formal differential equations were derived. Closely related problems were considered in~\cite{vis1,vis2,vis3}. Exponential formulae, which were used to obtain the coproducts and star products were also presented in \cite{BMMP, MMPP} and in a Lie deformed phase space from twists in the Hopf algebroid approach in \cite{JKM,JMsalg,LMMPW,LMW}.

Our motivation is to generalize the results found in \cite{BMMP, MMSSt1,MMSSt2,MMSSt3,MSS}. In Section~\ref{section2} we consider a class of exponentials in the Weyl--Heisenberg algebra with exponents of type at most linear in coordinates and arbitrary functions of momenta. They are expressed in terms of normal ordering where coordinates stand to the left of momenta. Exponents appearing in normal ordered form satisfy differential equations with boundary conditions that could be solved perturbatively order by order. Two propositions are presented for the Weyl--Heisenberg algebra in 2 dimensions. In Section~\ref{section3} we present these results for the Weyl--Heisenberg algebra in higher dimensions.

\section[Exponential formula for the Weyl--Heisenberg algebra in two dimensions]{Exponential formula for the Weyl--Heisenberg algebra\\ in two dimensions}\label{section2}

In this section we consider a class of exponentials in the Weyl--Heisenberg algebra in 2 dimensions (generated with one coordinate and the corresponding momentum) with exponents of type at most linear in the coordinate and arbitrary functions of momentum. They are expressed in terms of normal ordering in which powers of the coordinate stand to the left of the powers of momentum. Exponents appearing in normal ordered form satisfy differential equations with boundary conditions that could be solved perturbatively order by order. The Weyl algebra, quantum theory and normal ordering were presented in~\cite{mansour}, specially, physical aspects of normal ordering. Normal ordering in the Weyl algebra and relations in the extended Weyl algebra were also given~\cite{mansour}. Some functional analysis considerations concerning considered
operators, space, domain etc.\ can be found in~\cite{mansour}. We present two propositions that are generalizations of the results in \cite{BMMP,mansour, MSS}.

\begin{Proposition}\label{proposition1}
If $[ p,x ]=-{\rm i} $, then
\begin{itemize}\itemsep=0pt
\item[$i)$] ${\rm e}^{{\rm i}kx\varphi (p)} = {}{:} {\rm e}^{{\rm i}x\phi(k,p)}{:}$, $k\in \mathbb{R}$,
 where ${:}\, {:}$ denotes normal ordering, with the $x$'s left from the~$p$'s,
\item[$ii)$] $\phi(k,p)={\rm e}^{-{\rm i}kx\varphi(p)} p {\rm e}^{{\rm i}kx\varphi(p)}-p = J(k,p)-p$, i.e., $J(k,p)={\rm e}^{-{\rm i}kx\varphi(p)}p{\rm e}^{{\rm i}kx\varphi(p)}$,
\item[$iii)$] $J(k,p)$ is a unique solution of the partial differential equation
\begin{equation}\label{Prop1iii}
\frac{\partial J(k,p)}{\partial k}=\varphi(J(k,p)),
\end{equation}
with boundary condition $J(0,p)=p$.
\end{itemize}
\end{Proposition}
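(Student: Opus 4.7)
My plan is to prove the three parts essentially in the order (ii)--(iii)--(i), since (ii) is really a definition, (iii) follows by a short commutator computation, and (i) is then obtained by a uniqueness-of-ODE argument once (iii) is in hand.

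For (iii), I would take $J(k,p):={\rm e}^{-{\rm i}kx\varphi(p)}p\,{\rm e}^{{\rm i}kx\varphi(p)}$ and $\phi:=J-p$ as in (ii), and differentiate in $k$. Writing $A=kA'$ with $A'={\rm i}x\varphi(p)$, the fact that $[A,A']=0$ gives $\partial_k\bigl({\rm e}^{-A}p\,{\rm e}^{A}\bigr)={\rm e}^{-A}[p,A']{\rm e}^{A}$. The inner commutator evaluates to $[p,{\rm i}x\varphi(p)]={\rm i}[p,x]\varphi(p)=\varphi(p)$, and the outer conjugation then turns $p$ into $J$, producing
\[
\partial_k J={\rm e}^{-{\rm i}kx\varphi(p)}\varphi(p){\rm e}^{{\rm i}kx\varphi(p)}=\varphi(J).
\]
Together with $J(0,p)=p$ this is the desired PDE; uniqueness for this ODE (implicitly $\int_p^{J}{\rm d}J'/\varphi(J')=k$) also shows that $J$ depends on $k$ and $p$ only, so $\phi$ is an ordinary function of $k$ and $p$ with no hidden $x$-dependence.

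For (i), I would set $F(k):={\rm e}^{{\rm i}kx\varphi(p)}$ and $N(k):={:}\,{\rm e}^{{\rm i}x\phi(k,p)}\,{:}=\sum_{n\ge 0}\frac{({\rm i}x)^n}{n!}\phi(k,p)^n$, and prove both satisfy the same linear ODE $\partial_k Y={\rm i}x\varphi(p)\,Y$ with $Y(0)=1$. For $F$ this is immediate. For $N$ the key identity is the commutation-through-normal-ordering relation
\[
pN(k)=N(k)\bigl(p+\phi(k,p)\bigr)=N(k)J(k,p),
\]
obtained by expanding $N$ term by term and using $[p,x^n]=-{\rm i}nx^{n-1}$ to move $p$ past each $({\rm i}x)^n$, then re-indexing the two resulting sums. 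Iterating gives $p^{n}N=NJ^{n}$ and hence $\varphi(p)N=N\varphi(J)$ for any analytic $\varphi$. Termwise differentiation of the series for $N$ in $k$, together with (iii), then yields $\partial_k N={\rm i}xN\varphi(J)={\rm i}x\varphi(p)N$, matching the ODE for $F$; since $N(0)=F(0)=1$, uniqueness of solutions forces $N=F$, which is~(i).

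The main obstacle, as I see it, is the identity $pN=NJ$. Mechanically it is just a termwise application of $[p,x]=-{\rm i}$ in the normal-ordered expansion, but one genuinely needs the fact from (iii) that $\phi(k,p)$ is a function of $p$ alone, so that the rearrangements and the passage from $p^{n}N=NJ^{n}$ to $\varphi(p)N=N\varphi(J)$ go through without hidden $x$-dependence spoiling the normal-ordered form.
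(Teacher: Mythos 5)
Your proof is correct, and part~(i) follows a genuinely different route from the paper's. The paper posits the normal-ordered Ansatz ${\rm e}^{{\rm i}kx\varphi(p)}=\sum_{n}\frac{({\rm i}x)^n}{n!}\phi_n(k,p)$ and recovers the coefficients by applying $(\operatorname{ad}_p)^n$ and setting $x=0$; everything then rests on the identity $\operatorname{ad}_p\big({\rm e}^{{\rm i}kx\varphi(p)}\big)={\rm e}^{{\rm i}kx\varphi(p)}\phi(k,p)$, which is immediate from the definition of $J$, so part~(i) is obtained with no differential equation at all and (iii) is proved afterwards as a separate statement (by essentially the same computation you give). You instead work on the candidate series $N(k)={:}\,{\rm e}^{{\rm i}x\phi(k,p)}\,{:}$, establish the mirror-image intertwiner $pN=NJ$ by termwise commutation, and conclude by comparing $N$ with $F={\rm e}^{{\rm i}kx\varphi(p)}$ through the common equation $\partial_k Y={\rm i}x\varphi(p)Y$, $Y(0)=1$, which makes (i) logically dependent on (iii). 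The trade-off is that the paper's intertwining identity $pF=FJ$ costs nothing to prove but requires the Taylor-extraction step at $x=0$, whereas yours requires the (easy but genuine) termwise computation and an operator ODE-uniqueness argument. A real merit of your version is the explicit observation, justified via the ODE of (iii), that $J$ and hence $\phi$ carry no hidden $x$-dependence; the paper's evaluation at $x=0$ silently assumes exactly this. Both arguments operate at the same formal power-series level of rigour.
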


\begin{proof}
Expanding ${\rm e}^{{\rm i}kx\varphi(p)}=\sum_{n=0}^\infty \frac{1}{n!} ({\rm i}kx\varphi(p))^n$, and performing normal ordering of each term $(x\varphi(p))^n$, we are led to a general Ansatz
\begin{equation*}
{\rm e}^{{\rm i}kx\varphi(p)}=\sum_{n=0}^\infty \frac{({\rm i}x)^n}{n!}\phi_n(k,p),
\end{equation*}
for some $\phi_n(k,p)$,
\begin{equation*}
(\operatorname{ad}_p)^m {\rm e}^{{\rm i}kx\varphi(p)} = \sum_{n=m}^\infty\frac{({\rm i}x)^{n-m}}{(n-m)!} \phi_n(k,p),
\end{equation*}
where $\operatorname{ad}_p(x^n)=[p,x^n]=-{\rm i}nx^{n-1}$, and
\begin{equation*}
 \big( (\operatorname{ad}_p)^m {\rm e}^{{\rm i}kx\varphi(p)} \big)\big\vert_{x=0}= \phi_m(k,p).
\end{equation*}
Hence,
\begin{equation*}
{\rm e}^{{\rm i}kx\varphi(p)}=\sum_{n=0}^\infty \frac{({\rm i}x)^n}{n!} \big( (\operatorname{ad}_p)^n \big({\rm e}^{{\rm i}kx\varphi(p)}\big)\big)\big\vert_{x=0}.
\end{equation*}
Let us introduce
\begin{equation*}
J(k,p)={\rm e}^{-{\rm i}kx\varphi(p)}p{\rm e}^{{\rm i}kx\varphi(p)}= p+\phi(k,p).
\end{equation*}
Furthermore,
\begin{align*}
 \operatorname{ad}_p\big({\rm e}^{{\rm i}kx\varphi(p)}\big) & = \big[p,{\rm e}^{{\rm i}kx\varphi(p)}\big]= {\rm e}^{{\rm i}kx\varphi(p)} \big({\rm e}^{-{\rm i}kx\varphi(p)} p {\rm e}^{{\rm i}kx\varphi(p)} -p\big)\nonumber \\
& = {\rm e}^{{\rm i}kx\varphi(p)}(J(k,p)-p)= {\rm e}^{{\rm i}kx\varphi(p)} \phi(k,p).
\end{align*}
Similarly,
\begin{equation*}
(\operatorname{ad}_p)^n\big({\rm e}^{{\rm i}kx\varphi(p)}\big) ={\rm e}^{{\rm i}kx\varphi(p)} (\phi(k,p))^n.
\end{equation*}
It follows
\begin{equation*}
\phi_n(k,p)=(\operatorname{ad}_p)^n\big({\rm e}^{{\rm i}kx\varphi(p)}\big)\big\vert_{x=0}= (\phi(k,p))^n.
\end{equation*}
Hence
\begin{equation*}
{\rm e}^{{\rm i}kx\varphi(p)}= \sum_{n=0}^\infty \frac{({\rm i}x)^n}{n!} (\phi(k,p))^n= {:}{\rm e}^{{\rm i}x\phi(k,p)}{:}.
\end{equation*}
The differential equation for $J(k,p)$ is
\begin{align*}
\frac{\partial J(k,p)}{\partial k} &= {\rm e}^{-{\rm i}kx\varphi (p)} (-{\rm i}kx\varphi (p)p+{\rm i}px\varphi (p)) {\rm e}^{{\rm i}kx\varphi (p)}= {\rm e}^{-{\rm i}kx\varphi(p)} \varphi(p) {\rm e}^{{\rm i}kx\varphi(p)} \\
&= \varphi \big( {\rm e}^{-{\rm i}kx\varphi (p)}p{\rm e}^{{\rm i}kx\varphi (p)}\big)= \varphi(J(k,p)),
\end{align*}
with boundary condition $J(0,p)=p$.
\end{proof}

Let us introduce
\begin{equation*}
O=\operatorname{ad}_{-{\rm i}x\varphi(p)}= \varphi(p)\frac{\partial}{\partial p}.
\end{equation*}
Consequently,
\begin{equation*}
J(k,p)={\rm e}^{kO}(p), \qquad O(p)=\varphi(p),
\end{equation*}
and
\begin{equation*}
\phi(k,p) =J(k,p)-p=\left(\frac{{\rm e}^{kO}-1}{O}\right) (\varphi(p)), \qquad O(\varphi(p))= \varphi(p)\frac{\partial}{\partial p} \varphi(p).
\end{equation*}

For example if $\varphi(p) = p^l$ and ${\rm e}^{{\rm i}kx \varphi(p)}$ acts on ${\rm e}^{{\rm i}qx}$, $q \in \mathbb{R}$, then the solution of \eqref{Prop1iii} in iii) gives the same result as equation~(6,49) in~\cite{mansour}, changing ${\rm i}p \to {\rm d}/({\rm d} x)$, $(-{\rm i})^{l-1} k \to \lambda$, ${\rm i}q \to k$.

\begin{Proposition}\label{proposition2}
If $[p,x]=-{\rm i}$, then
\begin{equation*}
{\rm e}^{{\rm i}kx\varphi(p)+{\rm i}k\chi(p)}= {\rm e}^{{\rm i}kx\varphi(p)}{\rm e}^{{\rm i}h(k,p)}= {:}{\rm e}^{{\rm i}x\phi(k,p)}{:}\, {\rm e}^{{\rm i}h(k,p)}, \qquad k\in \mathbb{R},
\end{equation*}
where
\begin{equation}\label{eih}
{\rm e}^{{\rm i}h(k,p)}={\rm e}^{-{\rm i}kx\varphi(p)} {\rm e}^{{\rm i}kx\varphi(p)+{\rm i}k\chi(p)},
\end{equation}
and
\begin{equation*}
h(k,p)=\left(\frac{{\rm e}^{kO}-1}{O}\right)(\chi(p)).
\end{equation*}
$h(k,p)$ is the unique solution of the partial differential equation
\begin{equation*}
\frac{\partial h(k,p)}{\partial k}=\chi(J(k,p)),
\end{equation*}
with boundary condition $h(0,p)=0$.
\end{Proposition}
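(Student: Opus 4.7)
The first equality $e^{ikx\varphi(p)+ik\chi(p)}=e^{ikx\varphi(p)}e^{ih(k,p)}$ is merely a rearrangement of the definition \eqref{eih}, so the real content of the proposition is to show that the right-hand side of \eqref{eih} genuinely lies in the subalgebra generated by $p$ alone (so that the notation $e^{ih(k,p)}$ is justified), then to identify the function $h$ explicitly.

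The plan is to set $V(k):=e^{-ikx\varphi(p)}e^{ikx\varphi(p)+ik\chi(p)}$, so that $V(0)=1$, and differentiate in $k$. Since the exponent of $e^{ikx\varphi(p)+ik\chi(p)}$ is linear in $k$, it commutes with its own $k$-derivative, and the product rule applies without BCH corrections:
\[
\frac{dV}{dk}=-ix\varphi(p)\,V(k)+e^{-ikx\varphi(p)}\bigl(ix\varphi(p)+i\chi(p)\bigr)e^{ikx\varphi(p)}\,V(k).
\]
Two simplifications are then crucial: (a) $x\varphi(p)$ commutes with $e^{ikx\varphi(p)}$, so the two $x\varphi(p)$ contributions cancel; (b) by Proposition~\ref{proposition1}(ii) applied to the function $\chi$, $e^{-ikx\varphi(p)}\chi(p)e^{ikx\varphi(p)}=\chi(J(k,p))$. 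These collapse the expression to the clean equation $\frac{dV}{dk}=i\chi(J(k,p))\,V(k)$, whose coefficient depends only on $p$.

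Because this coefficient lies in the commutative subalgebra generated by $p$ and $V(0)=1$, a Picard-iteration (equivalently, a power-series-in-$k$) argument then shows that $V(k)$ itself lies in that subalgebra; hence $V(k)=e^{ih(k,p)}$ for a unique $h$ with $h(0,p)=0$. Since $V$ commutes with $\partial_k h$, the ODE reduces to $\partial_k h(k,p)=\chi(J(k,p))$, which is the stated partial differential equation. For the closed form, I invoke the remarks after Proposition~\ref{proposition1}: $J(k,p)=e^{kO}(p)$ with $O=\varphi(p)\partial_p$, and since $O$ is a derivation on functions of $p$, $e^{kO}$ is the flow, giving $\chi(J(k,p))=e^{kO}\chi(p)$. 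Integrating from $0$ to $k$ and using $\int_0^k e^{sO}\,ds=(e^{kO}-1)/O$ yields $h(k,p)=\bigl((e^{kO}-1)/O\bigr)\chi(p)$. The final identification with ${:}e^{ix\phi(k,p)}{:}\,e^{ih(k,p)}$ is immediate from Proposition~\ref{proposition1}(i).

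The main obstacle is the $x$-independence step: ensuring that $V(k)$ really does lie in the commutative subalgebra of functions of $p$, so that a logarithm $h(k,p)$ is defined and the manipulation $\frac{dV}{dk}=i(\partial_k h)V$ is legal. In a formal or perturbative setting this is essentially automatic from the structure of the ODE iterated order by order in $k$, but a rigorous operator-theoretic treatment would have to invoke the functional-analytic framework referenced in \cite{mansour}.
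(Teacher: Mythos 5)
Your proof is correct and follows essentially the same route as the paper: differentiate $e^{-ikx\varphi(p)}e^{ikx\varphi(p)+ik\chi(p)}$ in $k$, cancel the $x\varphi(p)$ terms, use Proposition~\ref{proposition1}(ii) to recognize $\chi(J(k,p))$, and integrate the resulting ODE to get $h=\bigl((e^{kO}-1)/O\bigr)\chi(p)$. The only difference is that you explicitly justify (via Picard iteration) that $V(k)$ lies in the subalgebra of functions of $p$ before writing it as $e^{ih(k,p)}$, a point the paper's proof takes for granted.
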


\begin{proof}
\begin{align*}
\frac{\partial h(k,p)}{\partial k}{\rm e}^{{\rm i}h(k,p)} &= -{\rm i}\frac{\partial}{\partial k}\left( {\rm e}^{-{\rm i}kx\varphi (p)}{\rm e}^{{\rm i}kx\varphi (p)+{\rm i}k\chi(p)}\right) \\
&= {\rm e}^{-{\rm i}kx\varphi (p)}\left( -x\varphi (p)+x\varphi (p)+\chi(p)\right) {\rm e}^{{\rm i}kx\varphi (p)+{\rm i}x\chi(p)} \\
& = {\rm e}^{-{\rm i}kx\varphi(p)}\chi(p) {\rm e}^{{\rm i}kx\varphi(p) +{\rm i}k\chi(p)} \\
&= \chi\big( {\rm e}^{-{\rm i}kx\varphi (p)} p{\rm e}^{{\rm i}kx\varphi(p)}\big) {\rm e}^{-{\rm i}kx\varphi(p)} {\rm e}^{{\rm i}kx\varphi(p)+{\rm i}k\chi(p)}= \chi(J(k,p)){\rm e}^{{\rm i}h(k,p)}.
\end{align*}
The partial differential equation for $h(k,p)$ follows from the above equation. The unique solution for $h(k,p)$ is
\begin{equation*}
h(k,p)=\left(\frac{{\rm e}^{kO}-1}{O}\right)(\chi(p))=k\sum_{n=0}^\infty \frac{1}{(n+1)!}k^n O^n(\chi(p)),
\end{equation*}
in agreement with the BCH formula
\begin{gather*}
{\rm e}^A {\rm e}^B=\exp \left( A+B+\frac{1}{2}[A,B] +\frac{1}{12}([A,[A,B]]+[B,[B,A]])+\cdots \right)
\end{gather*} and $A=-{\rm i}kx\varphi(p)$, $B={\rm i}kx\varphi(p)+ {\rm i}k\chi(p)$, see the r.h.s.\ of~\eqref{eih}.
\end{proof}

\section[Exponential formula for the Weyl--Heisenberg algebra in higher dimensions]{Exponential formula for the Weyl--Heisenberg algebra\\ in higher dimensions}\label{section3}

Propositions~\ref{proposition1} and~\ref{proposition2} can be generalized to the Weyl--Heisenberg algebra in higher dimensions, which leads to two theorems. Here the Weyl--Heisenberg algebra is generated with coordinates~$x_\mu$ and momenta $p_\nu$ with Minkowski metric $\eta_{\mu \nu} = \operatorname{diag}(-1,1,\dots ,1)$ and $\mu, \nu = 0,1,\dots ,n-1$. Summation over repeated indices is assumed. Generalization to arbitrary signature is straightforward.

\begin{Theorem}\label{theorem1}
If $[x_\mu,x_\nu]=0$, $[p_\mu,p_\nu]=0$ and $[p_\mu,x_\nu]=-{\rm i}\eta_{\mu\nu}$, then
\begin{itemize}\itemsep=0pt
\item[$i)$] ${\rm e}^{{\rm i}k_\beta x_\alpha \varphi_{\alpha\beta}(p)+{\rm i}k_\alpha \chi_\alpha (p)}= {:}{\rm e}^{{\rm i}x_\alpha \phi_\alpha(k,p)}{:}\, {\rm e}^{{\rm i}h(k,p)}$, $ k=(k_\mu) \in M_n$, where $M_n$ denotes Minkowski space.
\item[$ii)$] $\phi_\mu (k,p)=\big( {\rm e}^{O}\big)(p_\mu)-p_\mu= \big( \frac{{\rm e}^{O}-1}{O}\big) (k_\beta \varphi_{\mu \beta}(p))= J_\mu(k,p)-p_\mu$,
where
$O=k_\alpha O_\alpha$ and $O_\alpha =\operatorname{ad}_{(-{\rm i}x_\beta \varphi_{\beta\alpha}(p))}$.
\item[$iii)$] $J_\mu(k,p)={\rm e}^{-{\rm i}k_\beta x_\alpha \varphi_{\alpha\beta}(p)}p_\mu {\rm e}^{{\rm i}k_\beta x_\alpha\varphi_{\alpha\beta(p)}}= \big( {\rm e}^{k_\alpha O_\alpha}\big)(p_\mu)$,
$J_\mu(k,p)$ is the unique solution of the partial differential equation
\begin{equation*}
k_\beta \frac{\partial J_\mu(k,p)}{\partial k_\beta}=k_\beta \varphi_{\mu\beta}(J(k,p)),\qquad J_\mu(0,p)=p_\mu.
\end{equation*}
\item[$iv)$] ${\rm e}^{{\rm i}h(k,p)}={\rm e}^{-{\rm i}k_\beta x_\alpha \varphi_{\alpha \beta}(p)} {\rm e}^{{\rm i}k_\beta x_\alpha \varphi_{\alpha\beta}(p)+{\rm i}k_\alpha \chi_\alpha (p)}$,
$h(k,p)=\big(\frac{{\rm e}^{k_\alpha O_\alpha}-1}{k_\alpha O_\alpha}\big)(k_\beta \chi_\beta(p))$.
$h(k,p)$ is the unique solution of the partial differential equation
\begin{equation*}
k_\beta \frac{\partial h(k,p)}{\partial k_\beta}=k_\beta \chi_\beta(J(k,p)), \qquad h(0,p)=0.
\end{equation*}
\end{itemize}
\end{Theorem}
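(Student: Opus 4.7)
The plan is to mirror the proofs of Propositions~\ref{proposition1} and~\ref{proposition2}, replacing the one-variable derivative by the Euler operator $k_\beta\,\partial/\partial k_\beta$. Write $A(k):=k_\beta x_\alpha \varphi_{\alpha\beta}(p)$ and $C(k):=k_\alpha \chi_\alpha(p)$; both are homogeneous of degree one in $k$. The pivotal identity is
\[
k_\beta \frac{\partial}{\partial k_\beta}\, F(A(k)) = A(k)\, F'(A(k)) = F'(A(k))\, A(k)
\]
for any analytic $F$, which holds because $A(k)^n$ is homogeneous of degree $n$ in $k$ and $A$ commutes with its own powers, \emph{even though} individual partials $\partial/\partial k_\beta$ do not simplify (the coefficients $A_\beta := x_\alpha \varphi_{\alpha\beta}(p)$ failing to commute for different $\beta$). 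Applied to $J_\mu(k,p)={\rm e}^{-{\rm i}A}p_\mu{\rm e}^{{\rm i}A}$, together with $[p_\mu,A]=-{\rm i}k_\beta \varphi_{\mu\beta}(p)$ and the automorphism property ${\rm e}^{-{\rm i}A}\varphi_{\mu\beta}(p){\rm e}^{{\rm i}A}=\varphi_{\mu\beta}(J(k,p))$, the computation of Proposition~\ref{proposition1} transplants verbatim and yields (iii). Uniqueness follows by restricting to rays $k=t\kappa$, which reduces the PDE to the ODE system ${\rm d}G_\mu/{\rm d}t = \kappa_\beta \varphi_{\mu\beta}(G)$ with $G_\mu(0)=p_\mu$.

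For (ii), identifying $O = k_\alpha O_\alpha$ with $\operatorname{ad}_{-{\rm i}A}$ gives ${\rm e}^O(p_\mu) = {\rm e}^{-{\rm i}A}p_\mu {\rm e}^{{\rm i}A} = J_\mu$, hence $\phi_\mu = ({\rm e}^O-1)(p_\mu) = \frac{{\rm e}^O-1}{O}\bigl(O(p_\mu)\bigr)$ with $O(p_\mu) = k_\beta \varphi_{\mu\beta}(p)$. For the normal-ordered expansion in (i), the critical observation is that $A$ is linear in $x$ and $[p_\mu,A]$ is already $x$-free, so every iterated commutator $(\operatorname{ad}_{-{\rm i}A})^n(p_\mu)$ reduces to a function of $p$ alone; consequently the components $\phi_\mu(k,p)$ mutually commute. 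The Proposition~\ref{proposition1} identity $\operatorname{ad}_{p_\mu}({\rm e}^{{\rm i}A}) = {\rm e}^{{\rm i}A}\phi_\mu$ then iterates cleanly, using $[p_\mu,\phi_\nu]=0$, to
\[
\operatorname{ad}_{p_{\mu_1}}\cdots\operatorname{ad}_{p_{\mu_n}}\bigl({\rm e}^{{\rm i}A}\bigr) = {\rm e}^{{\rm i}A}\,\phi_{\mu_1}\cdots\phi_{\mu_n},
\]
and evaluating at $x=0$ recovers the normal-ordered Taylor coefficients in $x$, giving ${\rm e}^{{\rm i}A}={:}{\rm e}^{{\rm i}x_\alpha\phi_\alpha(k,p)}{:}$.

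For (iv), define ${\rm e}^{{\rm i}h(k,p)} := {\rm e}^{-{\rm i}A}\,{\rm e}^{{\rm i}A+{\rm i}C}$ and apply $k_\beta \partial_{k_\beta}$: since $A$ commutes with ${\rm e}^{-{\rm i}A}$, the two $A$-contributions cancel, and ${\rm e}^{-{\rm i}A}C\,{\rm e}^{{\rm i}A}=k_\alpha\chi_\alpha(J(k,p))$ leaves
\[
k_\beta \frac{\partial}{\partial k_\beta}\,{\rm e}^{{\rm i}h} = {\rm i}\,k_\alpha \chi_\alpha(J(k,p))\,{\rm e}^{{\rm i}h}.
\]
Every term in the BCH expansion of ${\rm e}^{-{\rm i}A}{\rm e}^{{\rm i}A+{\rm i}C}$ is $p$-only by the same mechanism as for $J_\mu$ (all nested commutators of $A$ and $C$ are $x$-free), so $h$ depends only on $p$, and the operator equation collapses to the scalar PDE $k_\beta \partial_{k_\beta} h = k_\alpha \chi_\alpha(J)$ with $h(0,p)=0$. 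Direct substitution verifies that the closed form stated in (iv) is the unique solution: $O$ acts on $p$-functions as the vector field $k_\beta\varphi_{\mu\beta}(p)\,\partial/\partial p_\mu$ whose flow sends $p_\mu$ to $J_\mu$, so ${\rm e}^O(k_\beta \chi_\beta(p)) = k_\alpha \chi_\alpha(J)$. Combining with (i) produces ${\rm e}^{{\rm i}A+{\rm i}C}={\rm e}^{{\rm i}A}{\rm e}^{{\rm i}h}={:}{\rm e}^{{\rm i}x_\alpha\phi_\alpha}{:}\,{\rm e}^{{\rm i}h}$. The only real obstacle throughout is the noncommutativity of the $A_\beta$'s, which destroys any naive $\partial/\partial k_\beta$ manipulation and would otherwise force the use of Duhamel's formula with heavy reordering; the Euler operator $k_\beta\partial_{k_\beta}$ sidesteps this entirely by reducing to pure degree counting on homogeneous polynomials in $k$, which is precisely what lets the 2D arguments transplant to the multi-parameter setting with only minor index bookkeeping.
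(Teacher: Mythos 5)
Your proof is correct and follows exactly the route the paper intends: the paper offers no details for Theorem~\ref{theorem1} beyond the remark that it ``follows from Propositions~\ref{proposition1} and~\ref{proposition2} in a straightforward way,'' and your argument is precisely that transplantation, with the Euler operator $k_\beta\,\partial/\partial k_\beta$ (equivalently the ray parametrization of Theorem~\ref{theorem2}) playing the role of $\partial/\partial k$. The two points you single out --- that homogeneity in $k$ lets one bypass the noncommutativity of the coefficients $x_\alpha\varphi_{\alpha\beta}(p)$, and that the $\phi_\mu(k,p)$ are $x$-free and hence mutually commute so the iterated $\operatorname{ad}_{p_\mu}$ expansion closes --- are exactly the details the paper leaves implicit, and you have verified them correctly.
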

The proof of Theorem~\ref{theorem1} follows from Propositions~\ref{proposition1} and~\ref{proposition2} in a straightforward way.

Theorem~\ref{theorem1} has been used to construct the Drinfeld twist corresponding to a
linear realization of the $\mathfrak{gl}(n)$ algebra~\cite{romp}.

For example $J_\mu(k,p)$ in Theorem~\ref{theorem1}(iii), in the 3rd order in the ad expansion is given by
\begin{gather}
J_\mu(k,p) = p_\mu +(k_\alpha O_\alpha)(p_\mu) +\frac{1}{2}(k_\alpha O_\alpha)^2(p_\mu)+\frac{1}{6}(k_\alpha O_\alpha)^3(p_\mu)\nonumber \\
\hphantom{J_\mu(k,p)}{} = p_\mu +k_\alpha \varphi_{\mu\alpha}(p) +\frac{1}{2}k_{\alpha'}\varphi_{\beta'\alpha'}(p) \frac{\partial}{\partial p_{\beta'}} (k_\alpha \varphi_{\mu\alpha}(p)) \nonumber \\
\hphantom{J_\mu(k,p)=}{} +\frac{1}{6} k_{\alpha''}\varphi_{\beta''\alpha''}(p) \frac{\partial}{\partial p_{\beta''}}\left( k_{\alpha'} \varphi_{\beta'\alpha'} (p)\frac{\partial}{\partial p_{\beta'}}(k_\alpha \varphi_{\mu\alpha}(p))\right).\label{J3reda}
\end{gather}
If we perform a transformation $k_\mu \to \lambda k_\mu$, $\lambda \in \mathbb{R}$, Theorem~\ref{theorem1} becomes

\begin{Theorem}\label{theorem2}\quad
\begin{itemize}\itemsep=0pt
\item[$i)$] ${\rm e}^{{\rm i}\lambda k_\beta x_\alpha \varphi_{\alpha\beta}(p)+ {\rm i}\lambda k_\alpha \chi_\alpha (p)} = {:} {\rm e}^{{\rm i}x_\alpha \phi_\alpha (\lambda k,p)}{:}\, {\rm e}^{h(\lambda k,p)}$.
\item[$ii)$] $\phi_\mu (\lambda k,p)=\big({\rm e}^{\lambda O}\big)(p_\mu)-p_\mu = \big( \frac{{\rm e}^{\lambda O}-1}{O}\big) (k_\beta \varphi_{\mu\beta}(p))= J_\mu(\lambda k,p) -p_\mu$, $O=k_\alpha O_\alpha$.
\item[$iii)$] $J_\mu(\lambda k,p)={\rm e}^{-{\rm i}\lambda k_\beta x_\alpha \varphi_{\alpha\beta}(p)}p_\mu {\rm e}^{{\rm i}\lambda k_\beta x_\alpha \varphi_{\alpha\beta}(p)}= \big({\rm e}^{\lambda O}\big)(p_\mu)$.
$J_\mu(\lambda k,p)$ is the unique solution of the partial differential equation
\begin{equation*} %\label{t2iii}
\frac{\partial J_\mu (\lambda k,p)}{\partial \lambda}= k_\beta \varphi_{\mu \beta}(J(\lambda k,p)), \qquad J_\mu (0,p)=p_\mu.
\end{equation*}
\item[$iv)$] ${\rm e}^{{\rm i}h(\lambda k,p)}={\rm e}^{-{\rm i}\lambda k_\beta x_\alpha \varphi_{\alpha\beta}(p)}{\rm e}^{{\rm i}\lambda k_\beta x_\alpha \varphi_{\alpha\beta}(p)+ {\rm i}\lambda k_\alpha \chi_\alpha (p)}$,
$h(\lambda k,p)=\big( \frac{{\rm e}^{\lambda O}-1}{O}\big)(k_\beta \chi_\beta (p))$.
$h(\lambda k,p)$ is the unique solution of the partial differential equation
\begin{equation*}
\frac{\partial h(\lambda k,p)}{\partial \lambda} =k_\beta \chi_\beta (J(\lambda k,p)),
\end{equation*}
with boundary condition $h(0,p)=0$.
\end{itemize}
\end{Theorem}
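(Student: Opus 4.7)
The plan is to derive Theorem~\ref{theorem2} from Theorem~\ref{theorem1} by the rescaling $k_\mu \to \lambda k_\mu$ with $\lambda \in \mathbb{R}$. Because Theorem~\ref{theorem1} holds for every $k=(k_\mu)\in M_n$, each of its four statements remains valid after this substitution; the only tasks are (a)~to track the $\lambda$ factors in view of the convention $O=k_\alpha O_\alpha$ (defined with the \emph{unscaled} $k_\alpha$), and (b)~to recast the homogeneous scaling PDE $k_\beta\,\partial_{k_\beta}(\,\cdot\,)=\cdots$ of Theorem~\ref{theorem1} as an ordinary differential equation in the single variable $\lambda$.

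First, I would substitute $k\to\lambda k$ in items (i)--(iv) of Theorem~\ref{theorem1}. Under this substitution the exponential in (ii) becomes $\exp(\lambda k_\alpha O_\alpha)=e^{\lambda O}$, and the operator identity $(e^{\widetilde O}-1)/\widetilde O$ with $\widetilde O=\lambda O$ acting on $\lambda k_\beta\varphi_{\mu\beta}(p)$ reduces, by pulling a factor of $\lambda$ through the linear operator, to $(e^{\lambda O}-1)/O$ acting on $k_\beta\varphi_{\mu\beta}(p)$, matching Theorem~\ref{theorem2}(ii). The same rearrangement yields the claimed closed form for $h(\lambda k,p)$ in (iv). The normal-ordering factorization in (i) and the conjugation identities in (iii), (iv) are purely algebraic and carry over verbatim.

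Next, for the PDEs I would invoke the chain rule. Writing Theorem~\ref{theorem1}(iii) at the point $\lambda k$ gives
\[
(\lambda k)_\beta\,\frac{\partial J_\mu}{\partial k_\beta}\bigg|_{\lambda k}
=(\lambda k)_\beta\,\varphi_{\mu\beta}\bigl(J(\lambda k,p)\bigr),
\]
and since $\partial_\lambda J_\mu(\lambda k,p)=k_\beta\,(\partial J_\mu/\partial k_\beta)(\lambda k,p)$, dividing by $\lambda$ produces exactly the ODE of Theorem~\ref{theorem2}(iii); the same manipulation, applied to Theorem~\ref{theorem1}(iv), gives the ODE of Theorem~\ref{theorem2}(iv). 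Uniqueness with the boundary data $J_\mu(0,p)=p_\mu$ and $h(0,p)=0$ is immediate from the Picard--Lindel\"of theorem applied componentwise, the right-hand sides being (at least formally) smooth in $J$ and $\lambda$.

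No genuine obstacle is expected: Theorem~\ref{theorem2} is a one-parameter reparametrization of Theorem~\ref{theorem1}. The only bookkeeping point warranting care is that $O=k_\alpha O_\alpha$ is defined with the unscaled coefficients, so that $\lambda$ plays the role of a flow parameter and Theorem~\ref{theorem2}(iii) becomes a genuine first-order autonomous ODE along the integral curves of $k_\beta\varphi_{\mu\beta}$.
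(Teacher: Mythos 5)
Your proposal is correct and follows essentially the same route as the paper, which offers no separate proof of Theorem~\ref{theorem2} beyond the remark that it is Theorem~\ref{theorem1} under $k_\mu\to\lambda k_\mu$ together with the identity $k_\beta\,\partial/\partial k_\beta \to \lambda\,\partial/\partial\lambda$; your chain-rule computation and the cancellation of $\lambda$ in $(e^{\lambda O}-1)/(\lambda O)$ acting on $\lambda k_\beta\varphi_{\mu\beta}(p)$ simply make that remark explicit.
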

Note that
\begin{equation*}
k_\beta \frac{\partial}{\partial k_\beta} \to \lambda k_\beta \frac{\partial}{\partial (\lambda k_\beta)}=\lambda \frac{\partial}{\partial \lambda}.
\end{equation*}
\begin{Corollary}
If $\varphi_{\mu\nu}(p)$ is linear in $p$, then $\phi_\mu(k,p)$ is also linear in $p$ and vice versa,
\begin{equation*}
{\rm e}^{{\rm i}\lambda x_\alpha A_{\alpha\beta}p_\beta}={:}{\rm e}^{{\rm i}x_\alpha ({\rm e}^{\lambda A}- \mathbbm{1} )_{\alpha\beta}p_\beta}{:}, \qquad \lambda \in \mathbb{R},
\end{equation*}
where $A$ is a matrix of scalar constants.
\end{Corollary}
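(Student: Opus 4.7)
The plan is to specialize Theorem~\ref{theorem2} to the case $\chi\equiv 0$ with a coordinate-coefficient that is jointly linear in $x$ and $p$. Matching the exponent ${\rm i}\lambda x_\alpha A_{\alpha\beta}p_\beta$ to that of Theorem~\ref{theorem2}(i), the relevant operator is $O=\operatorname{ad}_{-{\rm i}x_\alpha A_{\alpha\beta}p_\beta}$, and the forward direction of the corollary reduces to evaluating ${\rm e}^{\lambda O}(p_\mu)$ in closed form.

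The first step is the direct commutator computation $O(p_\mu)=A_{\mu\beta}p_\beta$, which follows from $[p_\mu,x_\nu]=-{\rm i}\eta_{\mu\nu}$ together with $[p_\beta,p_\mu]=0$. Since $O$ acts on polynomials in $p$ as a first-order linear differential operator that sends each $p_\mu$ to a linear combination of the $p$'s, it preserves the subspace of functions linear in $p$, and a simple induction yields $O^n(p_\mu)=(A^n)_{\mu\beta}p_\beta$. Summing the Taylor series then produces
\begin{equation*}
J_\mu(\lambda,p)={\rm e}^{\lambda O}(p_\mu)=\big({\rm e}^{\lambda A}\big)_{\mu\beta}p_\beta,\qquad \phi_\mu(\lambda,p)=J_\mu(\lambda,p)-p_\mu=\big({\rm e}^{\lambda A}-\mathbbm{1}\big)_{\mu\beta}p_\beta,
\end{equation*}
which in particular confirms that $\phi$ is linear in $p$. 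Substituting into Theorem~\ref{theorem2}(i), with $h(\lambda,p)\equiv 0$ because $\chi=0$, yields the displayed identity.

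For the ``vice versa'' direction I would differentiate the identity $\phi_\mu(\lambda k,p)=\big(\frac{{\rm e}^{\lambda O}-1}{O}\big)(k_\beta\varphi_{\mu\beta}(p))$ of Theorem~\ref{theorem2}(ii) at $\lambda=0$, obtaining $\partial_\lambda\phi_\mu(\lambda k,p)\big|_{\lambda=0}=k_\beta\varphi_{\mu\beta}(p)$. If $\phi_\mu$ is linear in $p$ for every $\lambda$, then so is this derivative, and since the identification holds for all $k$, each $\varphi_{\mu\beta}(p)$ must be linear in $p$. I do not anticipate a serious obstacle; the one point that deserves a careful check is that $O$ genuinely restricts to a derivation on polynomials in $p$, which is precisely what makes the matrix-exponential identification $O^n(p_\mu)=(A^n)_{\mu\beta}p_\beta$ exact and collapses the whole computation to ordinary matrix exponentiation.
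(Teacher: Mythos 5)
Your proposal is correct and follows essentially the same route as the paper: specialize the theorem to $k_\beta\varphi_{\mu\beta}(p)=A_{\mu\beta}p_\beta$ with $\chi=0$, identify $J_\mu(\lambda k,p)=({\rm e}^{\lambda A})_{\mu\beta}p_\beta$ via the action of ${\rm e}^{\lambda O}$ on $p_\mu$, and recover the converse by differentiating at $\lambda=0$ (the paper uses $\partial J_\mu/\partial k_\nu|_{k=0}$, an equivalent variant it states itself). You merely make explicit the induction $O^n(p_\mu)=(A^n)_{\mu\beta}p_\beta$ that the paper leaves implicit.
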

\begin{proof}
Let $k_\alpha \varphi_{\mu\alpha}(p)=A_{\mu\alpha}p_\alpha$, then according to Theorem~\ref{theorem1} $J_\mu(\lambda k,p)=\big({\rm e}^{\lambda O}\big)(p_\mu)= \big({\rm e}^{\lambda A}\big)_{\mu\alpha} p_\alpha$ and $\phi_\alpha (\lambda k,p)=\big( {\rm e}^{\lambda A} -\mathbbm{1}\big)_{\alpha\beta}p_\beta$.

If $\phi_\mu (k,p)$ and $J_\mu(k,p)$ are not linear in $p$, then $\varphi_{\mu\nu}(p)$ is also not linear in $p$. From the expressions for $J_\mu(k,p)$ and $h(k,p)$ in Theorem~\ref{theorem1}(iii)(iv), we find
\begin{equation*}
\varphi_{\mu\nu}(p) =\left. \frac{\partial J_\mu(k,p)}{\partial k_\nu}\right\vert_{k=0}, \qquad \chi_\mu (p)= \left. \frac{\partial h(k,p)}{\partial k_\mu}\right\vert_{k=0}.
\end{equation*}
Or from Theorem~\ref{theorem2}(iii)(iv)
\begin{equation*}
k_\alpha \varphi_{\mu\alpha}(p)= \left.\frac{J_\mu(\lambda k,p)}{\partial \lambda}\right\vert_{\lambda=0}, \qquad k_\alpha\chi_\alpha(p) =\left.\frac{\partial h(\lambda k,p)}{\partial \lambda}\right\vert_{\lambda=0}.
\end{equation*}
On \looseness=1 the other hand, from $\varphi_{\mu\nu}(p)$, we can construct $J_\mu(k,p)$ and $\phi_\mu(k,p)$ perturbatively order by order using Theorem~\ref{theorem1}(iii) or Theorem~\ref{theorem2}(iii), see equation~\eqref{J3reda}. Analogously, from~$\chi_\mu(p)$ we can construct $h(k,p)$ perturbatively order by order using Theorem~\ref{theorem1}(iv) or Theo\-rem~\ref{theorem2}(iv).
\end{proof}

Let us define the $\triangleright$ action by
\begin{equation*}%\label{btrokut}
x_\mu \triangleright f(x)= x_\mu f(x), \qquad p_\mu \triangleright f(x)= -{\rm i}\frac{\partial f(x)}{\partial x_\mu}.
\end{equation*}
These relations imply $p_\mu \triangleright {\rm e}^{{\rm i}qx}=q_\mu {\rm e}^{{\rm i}qx}$, whence $f(p) \triangleright {\rm e}^{{\rm i}qx}=f(q){\rm e}^{{\rm i}qx}$. Then we have,
\begin{Corollary}
\begin{equation*}
{\rm e}^{{\rm i}k_\beta x_\alpha \varphi_{\alpha\beta}(p)+{\rm i}k_\alpha \chi_\alpha (p)} \triangleright {\rm e}^{{\rm i}q_\alpha x_\alpha}= {\rm e}^{{\rm i}x_\alpha J_\alpha (k,q)+{\rm i}h(k,q)}.
\end{equation*}
\end{Corollary}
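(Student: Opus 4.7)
The plan is to reduce the corollary to a direct evaluation by first putting the exponential operator on the left into the normal‑ordered form supplied by Theorem~\ref{theorem1}(i), and then exploiting the fact that the $\triangleright$ action diagonalizes any function of the $p_\mu$'s on the plane wave ${\rm e}^{{\rm i}q_\alpha x_\alpha}$.

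More precisely, I would first invoke Theorem~\ref{theorem1} to write
\begin{equation*}
{\rm e}^{{\rm i}k_\beta x_\alpha \varphi_{\alpha\beta}(p)+{\rm i}k_\alpha \chi_\alpha(p)} = {:}{\rm e}^{{\rm i}x_\alpha \phi_\alpha(k,p)}{:}\,{\rm e}^{{\rm i}h(k,p)},
\end{equation*}
so that it suffices to compute the two factors in succession on ${\rm e}^{{\rm i}q_\alpha x_\alpha}$. The rightmost factor ${\rm e}^{{\rm i}h(k,p)}$ is a function of the $p_\mu$'s alone, so by the eigenvalue property $f(p)\triangleright {\rm e}^{{\rm i}q_\alpha x_\alpha}=f(q)\,{\rm e}^{{\rm i}q_\alpha x_\alpha}$ derived in the paragraph before the corollary it produces the scalar ${\rm e}^{{\rm i}h(k,q)}$ times the plane wave. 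Since ${\rm e}^{{\rm i}h(k,q)}$ is a scalar (a function of $k$ and $q$ only), it can be pulled out of the remaining action.

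Next I would apply ${:}{\rm e}^{{\rm i}x_\alpha \phi_\alpha(k,p)}{:}$ to the plane wave. Because the normal ordering places every $p$ to the right of every $x$, the $p$'s act first on ${\rm e}^{{\rm i}q_\alpha x_\alpha}$ and, again by the eigenvalue property, simply get replaced by $q$. Thus
\begin{equation*}
{:}{\rm e}^{{\rm i}x_\alpha \phi_\alpha(k,p)}{:}\,\triangleright {\rm e}^{{\rm i}q_\alpha x_\alpha}={\rm e}^{{\rm i}x_\alpha \phi_\alpha(k,q)}{\rm e}^{{\rm i}q_\alpha x_\alpha}={\rm e}^{{\rm i}x_\alpha(\phi_\alpha(k,q)+q_\alpha)}.
\end{equation*}
Using $\phi_\alpha(k,q)+q_\alpha=J_\alpha(k,q)$ from Theorem~\ref{theorem1}(ii) and combining with the scalar factor ${\rm e}^{{\rm i}h(k,q)}$ yields exactly ${\rm e}^{{\rm i}x_\alpha J_\alpha(k,q)+{\rm i}h(k,q)}$.

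The only subtle step, and the one that must be justified carefully, is the passage from ${:}{\rm e}^{{\rm i}x_\alpha \phi_\alpha(k,p)}{:}\triangleright {\rm e}^{{\rm i}q_\alpha x_\alpha}$ to ${\rm e}^{{\rm i}x_\alpha \phi_\alpha(k,q)}{\rm e}^{{\rm i}q_\alpha x_\alpha}$: one has to note that expanding the exponential inside the normal ordering symbol produces a series of monomials of the form $x_{\alpha_1}\cdots x_{\alpha_n} p_{\beta_1}\cdots p_{\beta_m}$, on each of which the $p$'s act as multiplication by the corresponding $q$'s and the $x$'s as ordinary multiplication. Resumming the series then gives the claimed identity, which together with the scalar factor ${\rm e}^{{\rm i}h(k,q)}$ completes the proof.
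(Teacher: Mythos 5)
Your proof is correct and is exactly the argument the paper intends: the paper states this corollary without proof as an immediate consequence of Theorem~\ref{theorem1}(i),(ii) and the eigenvalue property $f(p)\triangleright {\rm e}^{{\rm i}qx}=f(q){\rm e}^{{\rm i}qx}$, which is precisely the computation you carry out (and the same mechanism is used in the paper's proof of the following corollary). Your explicit justification of the step ${:}{\rm e}^{{\rm i}x_\alpha\phi_\alpha(k,p)}{:}\triangleright{\rm e}^{{\rm i}q_\alpha x_\alpha}={\rm e}^{{\rm i}x_\alpha\phi_\alpha(k,q)}{\rm e}^{{\rm i}q_\alpha x_\alpha}$ via termwise action on the normal-ordered monomials is a welcome filling-in of detail the paper leaves implicit.
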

\begin{Corollary}
Let $\lambda \in \mathbb{R}$ and let
\begin{align*}
{\rm e}^{{\rm i}\lambda x_\alpha F_\alpha^{(1)}(p)} = {:}{\rm e}^{{\rm i}x_\alpha (J_\alpha^{(1)} (\lambda ,p)-p_\alpha)}{:},\\
{\rm e}^{{\rm i}\lambda x_\alpha F_\alpha^{(2)}(p)} = {:}{\rm e}^{{\rm i}x_\alpha (J_\alpha^{(2)} (\lambda ,p)-p_\alpha)}{:},
\end{align*}
where
\begin{gather*}
J_\mu^{(1)}(\lambda,p)=\big( {\rm e}^{\lambda O^{(1)}}\big) (p_\mu), \qquad O^{(1)}= \operatorname{ad}_{-{\rm i}x_\alpha F_\alpha^{(1)}(p)},\\
J_\mu^{(2)}(\lambda,p)=\big( {\rm e}^{\lambda O^{(2)}}\big) (p_\mu), \qquad O^{(2)}= \operatorname{ad}_{-{\rm i}x_\alpha F_\alpha^{(2)}(p)}.
\end{gather*}
Then,
\begin{gather*}
 {\rm e}^{{\rm i}\lambda x_\alpha F_\alpha^{(1)}(p)} {\rm e}^{{\rm i}\lambda x_\beta F_\beta^{(2)}(p)}= {\rm e}^{{\rm i}\lambda x_\gamma F_\gamma^{(3)}(p)}={:}{\rm e}^{{\rm i}x_\gamma (J_\gamma^{(3)}(\lambda,p)-p_\gamma)}{:},\\
 J_\mu ^{(3)}(\lambda,p)=J_\mu ^{(1)}\big(\lambda,J^{(2)}(\lambda,p)\big).
\end{gather*}
\end{Corollary}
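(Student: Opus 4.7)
The plan is to test both sides of the claimed identity on the plane waves $e^{iq_\alpha x_\alpha}$, using the preceding corollary, which (with $\chi=0$, hence $h=0$) reduces to
\begin{equation*}
e^{i\lambda x_\alpha F_\alpha(p)} \triangleright e^{iq_\alpha x_\alpha} = e^{ix_\alpha J_\alpha(\lambda,q)}, \qquad J_\alpha(\lambda,q) = \big(e^{\lambda O}\big)(q_\alpha).
\end{equation*}
Applied to $E_i := e^{i\lambda x_\alpha F^{(i)}_\alpha(p)}$ for $i=1,2$, this gives the $\triangleright$-action of each factor in terms of the corresponding $J^{(i)}$.

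I would then evaluate $(E_1 E_2) \triangleright e^{iq_\alpha x_\alpha}$ in two ways. First, by associativity of the $\triangleright$ action,
\begin{equation*}
(E_1 E_2) \triangleright e^{iq_\alpha x_\alpha} = E_1 \triangleright \big(E_2 \triangleright e^{iq_\alpha x_\alpha}\big) = E_1 \triangleright e^{ix_\alpha J^{(2)}_\alpha(\lambda,q)}.
\end{equation*}
Since $J^{(2)}(\lambda,q)$ depends only on $q$ (not on $x$), the right-hand side is again a plane wave, with new momentum label $Q_\alpha := J^{(2)}_\alpha(\lambda,q)$. The formula above is valid for any commuting momentum vector, so substituting $q \to Q$ yields
\begin{equation*}
(E_1 E_2) \triangleright e^{iq_\alpha x_\alpha} = e^{ix_\alpha J^{(1)}_\alpha(\lambda,\, J^{(2)}(\lambda,q))}.
\end{equation*}

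To close the identification I need $E_1 E_2$ itself to be of the form $e^{i\lambda x_\gamma F^{(3)}_\gamma(p)}$ for some (possibly $\lambda$-dependent) $F^{(3)}$, so that Theorem~\ref{theorem2}(i) applies and produces both the normal-ordered form ${:}e^{ix_\gamma(J^{(3)}_\gamma(\lambda,p)-p_\gamma)}{:}$ and the second $\triangleright$-action formula $(E_1E_2)\triangleright e^{iqx} = e^{ix_\alpha J^{(3)}_\alpha(\lambda,q)}$. This existence follows formally from the BCH expansion together with the closure observation that $\big[x_\alpha G^{(1)}_\alpha(p),\, x_\beta G^{(2)}_\beta(p)\big]$ is again linear in $x$ with a coefficient depending only on $p$ (its $x_\alpha x_\beta$ and purely $p$-dependent pieces vanish because $[x_\alpha,x_\beta]=0$ and $[G^{(1)},G^{(2)}]=0$), so every nested commutator in the BCH series remains of the form $x_\gamma H_\gamma(p)$. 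Comparing the two evaluations of $(E_1E_2)\triangleright e^{iqx}$ and using that plane waves separate momenta then gives $J^{(3)}_\mu(\lambda,p) = J^{(1)}_\mu(\lambda, J^{(2)}(\lambda,p))$. The main obstacle is precisely this BCH-closure/existence step; once it is in place, the rest reduces to a direct application of the preceding corollary and Theorem~\ref{theorem2}.
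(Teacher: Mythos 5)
Your proposal is correct and follows essentially the same route as the paper: evaluate the product on plane waves via the $\triangleright$ action, use the preceding corollary (with $\chi=0$) twice to obtain ${\rm e}^{{\rm i}x_\alpha J^{(1)}_\alpha(\lambda,J^{(2)}(\lambda,q))}$, and read off $J^{(3)}=J^{(1)}\circ J^{(2)}$ together with the normal-ordered form. Your extra BCH-closure argument for the existence of $F^{(3)}$ (the commutator of two terms $x_\alpha G_\alpha(p)$ staying of the same form) is a sensible supplement to what the paper only alludes to when it notes that constructing $F^{(3)}$ from $\partial J^{(3)}_\mu/\partial\lambda\big\vert_{\lambda=0}$ is ``an alternative to the BCH expansion.''
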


\begin{proof}
\begin{equation*}
{\rm e}^{{\rm i}\lambda x_\alpha F_\alpha^{(1)}(p)}\,{\rm e}^{{\rm i}\lambda x_\beta F_\beta^{(2)}(p)}\triangleright {\rm e}^{{\rm i}qx}={\rm e}^{{\rm i}\lambda x_\alpha F_\alpha^{(1)}(p)} \triangleright {\rm e}^{{\rm i}x_\alpha J_\alpha^{(2)}(\lambda,q)} = {\rm e}^{{\rm i}x_\alpha J_\alpha^{(1)}(\lambda,J^{(2)}(\lambda,q))},
\end{equation*}
from which it follows
\begin{gather*}
J_\mu ^{(3)}(\lambda,q)=J_\mu ^{(1)}\big(\lambda,J^{(2)}(\lambda,q)\big), \qquad \forall\, q \in \mathbb{R},\\
J_\mu ^{(3)}(\lambda,p)= J_\mu ^{(1)}\big(\lambda,J^{(2)}(\lambda,p)\big).
\end{gather*}
From $J_\mu ^{(3)}(\lambda,p)= \big( {\rm e}^{\lambda O^{(3)}}\big) (p_\mu)$ and $O^{(3)}= \operatorname{ad}_{-ix_\alpha F_\alpha^{(3)}(p)}$ we can construct $F_\mu ^{(3)}(p)= \frac{\partial J^{(3)}_\mu(\lambda,p)}{\partial \lambda}\big\vert_{\lambda=0}$. This is an alternative construction to the one obtained from the BCH expansion.
\end{proof}

The results from Section~\ref{section3} can be applied to arbitrary noncommutative spaces for the construction of star products, coproducts of momenta and twist operators \cite{BMMP, MMPP}, specially for the Weyl realizations of Lie algebra type spaces. They can also be related to the BCH formula.

\pdfbookmark[1]{References}{ref}
\LastPageEnding


\begin{thebibliography}{99}
\footnotesize\itemsep=-0.5pt

\bibitem{gacluknow}
Amelino-Camelia G., Lukierski J., Nowicki A., {$\kappa$}-deformed covariant
 phase space and quantum-gravity uncertainty relations, \textit{Phys. Atomic
 Nuclei} \textbf{61} (1998), 1811--1815, \href{https://arxiv.org/abs/hep-th/9706031}{arXiv:hep-th/9706031}.

\bibitem{asc-disp}
Aschieri P., Borowiec A., Pacho{\l} A., Dispersion relations in
 $\kappa$-noncommutative cosmology, \href{https://doi.org/10.1088/1475-7516/2021/04/025}{\textit{J.~Cosmol. Astropart. Phys.}}
 \textbf{2021} (2021), no.~4, 025, 20~pages, \href{https://arxiv.org/abs/2009.01051}{arXiv:2009.01051}.

\bibitem{battisti}
Battisti M.V., Meljanac S., Modification of {H}eisenberg uncertainty relations
 in noncommutative {S}nyder space-time geometry, \href{https://doi.org/10.1103/PhysRevD.79.067505}{\textit{Phys. Rev.~D}}
 \textbf{79} (2009), 067505, 4~pages, \href{https://arxiv.org/abs/0812.3755}{arXiv:0812.3755}.

\bibitem{battisti2}
Battisti M.V., Meljanac S., Scalar field theory on noncommutative {S}nyder
 spacetime, \href{https://doi.org/10.1103/PhysRevD.82.024028}{\textit{Phys. Rev.~D}} \textbf{82} (2010), 024028, 9~pages,
 \href{https://arxiv.org/abs/1003.2108}{arXiv:1003.2108}.

\bibitem{BMMP}
Borowiec A., Meljanac D., Meljanac S., Pacho{\l} A., Interpolations between
 {J}ordanian twists induced by coboundary twists, \href{https://doi.org/10.3842/SIGMA.2019.054}{\textit{SIGMA}} \textbf{15}
 (2019), 054, 22~pages, \href{https://arxiv.org/abs/1812.05535}{arXiv:1812.05535}.

\bibitem{ChPr}
Chari V., Pressley A., A guide to quantum groups, Cambridge University Press,
 Cambridge, 1994.

\bibitem{dasklukwor}
Daszkiewicz M., Lukierski J., Woronowicz M., Towards quantum noncommutative
 {$\kappa$}-deformed field theory, \href{https://doi.org/10.1103/PhysRevD.77.105007}{\textit{Phys. Rev.~D}} \textbf{77} (2008),
 105007, 10~pages, \href{https://arxiv.org/abs/0708.1561}{arXiv:0708.1561}.

\bibitem{dfr1}
Doplicher S., Fredenhagen K., Roberts J.E., Spacetime quantization induced by
 classical gravity, \href{https://doi.org/10.1016/0370-2693(94)90940-7}{\textit{Phys. Lett.~B}} \textbf{331} (1994), 39--44.

\bibitem{dfr2}
Doplicher S., Fredenhagen K., Roberts J.E., The quantum structure of spacetime
 at the {P}lanck scale and quantum fields, \href{https://doi.org/10.1007/BF02104515}{\textit{Comm. Math. Phys.}}
 \textbf{172} (1995), 187--220, \href{https://arxiv.org/abs/hep-th/0303037}{arXiv:hep-th/0303037}.

\bibitem{GGHMM}
Govindarajan T.R., Gupta K.S., Harikumar E., Meljanac S., Meljanac D., Twisted
 statistics in {$\kappa$}-{M}inkowski spacetime, \href{https://doi.org/10.1103/PhysRevD.77.105010}{\textit{Phys. Rev.~D}}
 \textbf{77} (2008), 105010, 6~pages, \href{https://arxiv.org/abs/0802.1576}{arXiv:0802.1576}.

\bibitem{JKM}
Juri\'c T., Kova\v{c}evi\'c D., Meljanac S., {$\kappa$}-deformed phase space,
 {H}opf algebroid and twisting, \href{https://doi.org/10.3842/sigma.2014.106}{\textit{SIGMA}} \textbf{10} (2014), 106,
 18~pages, \href{https://arxiv.org/abs/1402.0397}{arXiv:1402.0397}.

\bibitem{JMsalg}
Juri\'c T., Meljanac S., \v{S}trajn R., {$\kappa$}-{P}oincar\'e--{H}opf algebra
 and {H}opf algebroid structure of phase space from twist, \href{https://doi.org/10.1016/j.physleta.2013.07.021}{\textit{Phys.
 Lett.~A}} \textbf{377} (2013), 2472--2476, \href{https://arxiv.org/abs/1303.0994}{arXiv:1303.0994}.

\bibitem{LMMPW}
Lukierski J., Meljanac D., Meljanac S., Pikuti\'c D., Woronowicz M.,
 Lie-deformed quantum {M}inkowski spaces from twists: {H}opf-algebraic versus
 {H}opf-algebroid approach, \href{https://doi.org/10.1016/j.physletb.2017.12.007}{\textit{Phys. Lett.~B}} \textbf{777} (2018), 1--7,
 \href{https://arxiv.org/abs/1710.09772}{arXiv:1710.09772}.

\bibitem{LMW}
Lukierski J., Meljanac S., Woronowicz M., Quantum twist-deformed {$D=4$} phase
 spaces with spin sector and {H}opf algebroid structures, \href{https://doi.org/10.1016/j.physletb.2018.11.055}{\textit{Phys.
 Lett.~B}} \textbf{789} (2019), 82--87, \href{https://arxiv.org/abs/1811.07365}{arXiv:1811.07365}.

\bibitem{Luk2}
Lukierski J., Nowicki A., Ruegg H., New quantum {P}oincar\'e algebra and
 {$\kappa$}-deformed field theory, \href{https://doi.org/10.1016/0370-2693(92)90894-A}{\textit{Phys. Lett.~B}} \textbf{293} (1992),
 344--352.

\bibitem{Luk1}
Lukierski J., Ruegg H., Nowicki A., Tolstoy V.N., {$q$}-deformation of
 {P}oincar\'e algebra, \href{https://doi.org/10.1016/0370-2693(91)90358-W}{\textit{Phys. Lett.~B}} \textbf{264} (1991), 331--338.

\bibitem{majid}
Majid S., Foundations of quantum group theory, \href{https://doi.org/10.1017/CBO9780511613104}{Cambridge University Press},
 Cambridge, 1995.

\bibitem{majidruegg}
Majid S., Ruegg H., Bicrossproduct structure of {$\kappa$}-{P}oincar\'e group
 and non-commutative geometry, \href{https://doi.org/10.1016/0370-2693(94)90699-8}{\textit{Phys. Lett.~B}} \textbf{334} (1994),
 348--354, \href{https://arxiv.org/abs/hep-th/9405107}{arXiv:hep-th/9405107}.

\bibitem{mansour}
Mansour T., Schork M., Commutation relations, normal ordering, and {S}tirling
 numbers, \textit{Discrete Mathematics and its Applications (Boca Raton)}, CRC Press,
 Boca Raton, FL, 2016.

\bibitem{MMMs2}
Meljanac D., Meljanac S., Mignemi S., \v{S}trajn R., {$\kappa$}-deformed phase
 spaces, {J}ordanian twists, {L}orentz--{W}eyl algebra, and dispersion
 relations, \href{https://doi.org/10.1103/physrevd.99.126012}{\textit{Phys. Rev.~D}} \textbf{99} (2019), 126012, 12~pages,
 \href{https://arxiv.org/abs/1903.08679}{arXiv:1903.08679}.

\bibitem{pikuticgupta}
Meljanac D., Meljanac S., Pikuti\'c D., Gupta K.S., Twisted statistics and the
 structure of {L}ie-deformed {M}inkowski spaces, \href{https://doi.org/10.1103/physrevd.96.105008}{\textit{Phys. Rev.~D}}
 \textbf{96} (2017), 105008, 6~pages, \href{https://arxiv.org/abs/1703.09511}{arXiv:1703.09511}.

\bibitem{MMPP}
Meljanac S., Meljanac D., Pacho{\l} A., Pikuti\'c D., Remarks on simple
 interpolation between {J}ordanian twists, \href{https://doi.org/10.1088/1751-8121/aa72d7}{\textit{J.~Phys.~A: Math. Theor.}}
 \textbf{50} (2017), 265201, 11~pages, \href{https://arxiv.org/abs/1612.07984}{arXiv:1612.07984}.

\bibitem{MMSSt1}
Meljanac S., Meljanac D., Samsarov A., Stoji\'c M., Lie algebraic deformations
 of Minkowski space with Poincar\'e algebra, \href{https://arxiv.org/abs/0909.1706}{arXiv:0909.1706}.

\bibitem{MMSSt2}
Meljanac S., Meljanac D., Samsarov A., Stoji\'c M., {$\kappa$}-deformed
 {S}nyder spacetime, \href{https://doi.org/10.1142/S0217732310032652}{\textit{Modern Phys. Lett.~A}} \textbf{25} (2010),
 579--590, \href{https://arxiv.org/abs/0912.5087}{arXiv:0912.5087}.

\bibitem{MMSSt3}
Meljanac S., Meljanac D., Samsarov A., Stoji\'c M., Kappa {S}nyder deformations
 of {M}inkowski spacetime, realizations and {H}opf algebra, \href{https://doi.org/10.1103/physrevd.83.065009}{\textit{Phys.
 Rev.~D}} \textbf{83} (2011), 065009, 16~pages, \href{https://arxiv.org/abs/1102.1655}{arXiv:1102.1655}.

\bibitem{MSS}
Meljanac S., \v{S}koda Z., Svrtan D., Exponential formulas and {L}ie algebra
 type star products, \href{https://doi.org/10.3842/SIGMA.2012.013}{\textit{SIGMA}} \textbf{8} (2012), 013, 15~pages,
 \href{https://arxiv.org/abs/1006.0478}{arXiv:1006.0478}.

\bibitem{romp}
Meljanac S., \v{S}koda Z., \v{S}trajn R., Generalized {H}eisenberg algebra,
 realizations of the ${\mathfrak{gl}}(n)$ algebra and applications,
 \textit{Rep. Math. Phys.}, {t}o appear, \href{https://arxiv.org/abs/2107.03111}{arXiv:2107.03111}.

\bibitem{Snyder}
Snyder H.S., Quantized space-time, \href{https://doi.org/10.1103/PhysRev.71.38}{\textit{Phys. Rev.}} \textbf{71} (1947),
 38--41.

\bibitem{vis1}
Viskov O.V., On the ordered form of a non-commutative binomial, \href{https://doi.org/10.1070/RM1991v046n02ABEH002737}{\textit{Russian
 Math. Surveys}} \textbf{46} (1991), 258--259.

\bibitem{vis2}
Viskov O.V., Expansion in powers of a noncommutative binomial, \textit{Proc.
 Steklov Inst. Math.} \textbf{216} (1997), 63--69.

\bibitem{vis3}
Viskov O.V., An approach to ordering, \textit{Dokl. Math.} \textbf{57} (1998),
 69--71.

\end{thebibliography}
\end{document}